\documentclass[11pt]{article}

\usepackage[margin=1.02in]{geometry}
\usepackage{amsmath,amssymb,amsthm,mathtools}
\usepackage{microtype}
\usepackage[hidelinks,pdfusetitle]{hyperref}
\usepackage{array,booktabs}

\newtheorem{theorem}{Theorem}[section]
\newtheorem{corollary}[theorem]{Corollary}
\newtheorem{lemma}[theorem]{Lemma}
\newtheorem{proposition}[theorem]{Proposition}
\newtheorem{inputlemma}[theorem]{Imported estimate}
\newtheorem{importedtheorem}[theorem]{Imported theorem}

\newcommand{\PPSZ}{\textnormal{\textsc{PPSZ}}}
\newcommand{\Pw}{\mathcal P^{(w)}}
\newcommand{\TwoCC}{\mathrm{TwoCC}}
\newcommand{\ID}{\mathrm{ID}}
\newcommand{\eps}{\varepsilon}
\newcommand{\gain}{\mathrm{gain}}
\newcommand{\Thr}{\mathrm{Thr}}
\newcommand{\Lreg}{L_{\mathrm{reg}}}
\newcommand{\Lirr}{L_{\mathrm{irr}}}
\newcommand{\Preg}{P_{\mathrm{reg}}}
\newcommand{\hbin}{h_2}
\newcommand{\sat}{\mathrm{sat}}
\newcommand{\E}{\mathbb E}
\newcommand{\Prb}{\mathbb P}

\title{A Better Analysis For PPSZ For 3-SAT}
\author{Tao Jiang \and Shaowei Cai}
\date{July 2026}

\begin{document}
\maketitle

\begin{abstract}
We revisit Scheder's analysis of the original PPSZ algorithm.  Keeping his regular and irregular estimates unchanged, we express them in common structural coordinates and replace only their final recombination by an explicit linear-programming dual certificate.  The old and new running-time bounds are
\[
\begin{array}{c|cc}
 & \text{Unique-$3$-SAT} & \text{general $3$-SAT} \\
\hline
\text{Scheder's analysis} & O^*(1.306972377^n) & O^*(1.307031594^n) \\
\text{this work} & O^*(1.306969598^n) & O^*(1.307031578^n).
\end{array}
\]
In both rows, the general-case bound is obtained by applying the same existing Scheder--Steinberger unique-to-general lifting theorem to the corresponding Unique-$3$-SAT analysis.  To the best of our knowledge, $O^*(1.307031578^n)$ is the best currently known worst-case randomized running-time bound for general $3$-SAT.  Neither PPSZ nor the lifting theorem is modified.  The numerical inequalities are certified by exact rational interval computation.
\end{abstract}

\section{Introduction}\label{sec:intro}

The PPSZ algorithm of Paturi, Pudl\'ak, Saks, and Zane~\cite{PPSZ} processes the variables of a satisfiable CNF formula in a uniformly random order.  At each variable it applies a bounded implication rule; if the value is not inferred, it guesses an unbiased bit.  For Unique-$3$-SAT, the classical exponent is
\[
 \Prb[\PPSZ(F)=\alpha]
 \ge 2^{-p_0n-o(n)},
 \qquad
 p_0=2\ln 2-1,
\]
so the corresponding running-time base is $2^{p_0}=1.3070319\ldots$.

Scheder~\cite{SchederFOCS,SchederFull,SchederJournal} obtained a stronger bound for the same algorithm.  His full $k=3$ argument derives two lower bounds, called the regular and irregular estimates.  In the notation reconciled in Section~\ref{sec:inputs}, the final simplification in Section~6 of the full version is
\begin{equation}\label{eq:old-endgame-inputs}
 \gain_R\ge \frac{|H|}{10118}-\frac{n}{41391},
 \qquad
 \gain_I\ge \frac{|J_1|+2|J_0|}{1380}.
\end{equation}
Writing $\mathrm{irr}=(|J_1|+2|J_0|)/n$ and using $|H|/n\ge 1-\mathrm{irr}$ gives
\begin{equation}\label{eq:old-endgame}
 \frac1n\max\{\gain_R,\gain_I\}
 \ge
 \max\left\{
   \frac{1-\mathrm{irr}}{10118}-\frac1{41391},
   \frac{\mathrm{irr}}{1380}
 \right\}
 \ge \frac1{15218}.
\end{equation}
Thus Scheder's published unique-case bonus is
\[
 \gamma_{\mathrm{old}}=\frac1{15218}
 =0.000065711657247995\ldots,
\]
with unrounded base $1.306972376565153\ldots$.

Our argument starts from these two estimates.  We retain a positive regular coefficient that is discarded in the simplification leading to~\eqref{eq:old-endgame}, express both estimates in the common coordinates
\[
 i_0=\frac{|\ID_0|}{n},
 \qquad
 i_1=\frac{|\ID_1|}{n},
 \qquad
 \tau=\frac{|\TwoCC|}{n},
\]
and combine them by a feasible dual solution of a three-variable linear program.  For fixed numerical parameters, the two bounds have the form
\[
 \Lreg=A-\Preg-2Ai_0-Ai_1+S\tau,
 \qquad
 \Lirr=b_0i_0+b_1i_1+b_T\tau.
\]
For the parameters fixed below, $S>0$ and $b_T<0$.  Taking $\lambda=b_1/A$ gives
\[
 b_0-2\lambda A>0,
 \qquad
 b_1-\lambda A=0,
 \qquad
 b_T+\lambda S>0,
\]
so the weighted average $(\lambda\Lreg+\Lirr)/(1+\lambda)$ has no negative structural coefficient.

\begin{theorem}[Unique-$3$-SAT]\label{thm:main}
Let
\[
 \gamma_{\mathrm{new}}=0.0000687793.
\]
There exist a finite implication strength $w_0$ and an integer $n_0$ such that, for every $w\ge w_0$, every $n\ge n_0$, and every 3-CNF formula $F$ on $n$ variables with unique satisfying assignment $\alpha$,
\[
 \Prb[\PPSZ_w(F)=\alpha]
 \ge 2^{-p_0n+\gamma_{\mathrm{new}}n}
 =2^{-n+s_3n+\gamma_{\mathrm{new}}n}.
\]
Consequently, independent repetition gives a randomized algorithm with running time
\[
 O^*(1.306969598^n).
\]
\end{theorem}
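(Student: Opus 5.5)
The plan is to take Scheder's two estimates as imported inputs and change only the final recombination. First, in the notation of Section~\ref{sec:inputs}, I would restate his regular estimate without the simplification behind \eqref{eq:old-endgame-inputs}. The lower bound on $\gain_R$ will be written directly in the coordinates $i_0,i_1,\tau$ rather than through $|H|$. Concretely, $|H|$ decomposes as $n-2|\ID_0|-|\ID_1|$ plus a nonnegative multiple of $|\TwoCC|$, and that multiple is exactly the discarded positive term. This yields $\gain_R/n\ge \Lreg=A-\Preg-2Ai_0-Ai_1+S\tau$ with $S>0$. Likewise, I would rewrite the irregular estimate as $\gain_I/n\ge\Lirr=b_0i_0+b_1i_1+b_T\tau$. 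Here $b_T<0$ arises because $J_0,J_1$ are expressed through $\ID_0,\ID_1$ minus a correction counting variables in two-clause components. Both inequalities are valid for every $w\ge w_0$ and $n\ge n_0$, with $o(n)$ error terms absorbed into $n_0$. All constants $A,\Preg,S,b_0,b_1,b_T$ are fixed rationals determined by Scheder's parameter choices.

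Second, I would use the trivial inequality $\max\{x,y\}\ge(\lambda x+y)/(1+\lambda)$ for $\lambda\ge0$ with $\lambda=b_1/A$. This gives
\[
\frac1n\max\{\gain_R,\gain_I\}\ge\frac{\lambda(A-\Preg)+(b_0-2\lambda A)i_0+(b_1-\lambda A)i_1+(b_T+\lambda S)\tau}{1+\lambda}.
\]
Since $i_0,i_1,\tau\ge0$ and the three coefficients are respectively $>0$, $=0$, and $>0$, the right side is at least $\lambda(A-\Preg)/(1+\lambda)$. I would then check that this constant is at least $\gamma_{\mathrm{new}}=0.0000687793$. Scheder's framework gives $\Prb[\PPSZ_w(F)=\alpha]\ge 2^{-p_0n+\max\{\gain_R,\gain_I\}-o(n)}$, so after adjusting $n_0$ (and spending a sliver of slack from rounding $\gamma_{\mathrm{new}}$ down to absorb the $o(n)$) this gives the stated probability bound. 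The running-time claim then follows from standard repetition: $2^{p_0-\gamma_{\mathrm{new}}}\le1.306969598$, checked by exact rational interval arithmetic on $\ln2$.

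The main obstacle is the first step. I must confirm that the coordinate translation is faithful, i.e.\ that Scheder's regular and irregular bounds really take these linear forms with the claimed signs. The delicate point is that $S>0$ must come from a term that is genuinely nonnegative in his derivation, and that the identification of $|J_0|,|J_1|$ in terms of $\ID_0,\ID_1,\TwoCC$ is exact or a valid inequality in the right direction. I would verify this by tracing Section~6 of the full version term by term. The remaining numerical inequalities $b_0-2\lambda A>0$, $b_T+\lambda S>0$, $\lambda(A-\Preg)/(1+\lambda)\ge\gamma_{\mathrm{new}}$, and the final base bound would be certified by exact rational interval computation.
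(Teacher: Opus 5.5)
\textbf{Gap: the numerical check fails at Scheder's parameters.} Your recombination step matches the paper: the weighted average with $\lambda=b_1/A$, the tight $i_1$ coefficient, and the resulting constant $\lambda(A-\Preg)/(1+\lambda)=b_1(A-\Preg)/(A+b_1)$ are Proposition~\ref{prop:certificate} and~\eqref{eq:gamma-star}. The problem is your choice of constants. You take $A,\Preg,S,b_0,b_1,b_T$ ``determined by Scheder's parameter choices,'' i.e.\ $\eps_R=0.1$ and $\eps_I=0.029$. At those values the same certificate gives only $0.000065719084\ldots$ (recorded at the end of Appendix~\ref{app:provenance}). That is about $7\cdot10^{-9}$ above $1/15218$ and well below $\gamma_{\mathrm{new}}=0.0000687793$. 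So your step ``check that this constant is at least $\gamma_{\mathrm{new}}$'' fails.

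\textbf{Missing idea: retuning the parameters.} Almost all of the roughly $3\cdot10^{-6}$ improvement comes from retuning the parameters; the recombination alone contributes only the $7\cdot10^{-9}$. The paper does the following:
\begin{itemize}
\item It imports both estimates \emph{before} Scheder's substitutions: the regular bound~\eqref{eq:regular} with $\eps_R,\Thr$ free, and the irregular bound~\eqref{eq:irregular} with $\eps_I$ free.
\item It couples the threshold to the regular coefficient via $A=\tfrac{17}{18}c_L$ and $\Thr=2A/0.9$, so the $|H_{\rm low}|$ and $|H_{\rm high}|$ coefficients become $\tfrac{18}{17}A$ and $2A$.
\item It evaluates at $\eps_R\approx0.10248$ and $\eps_I\approx0.07307$.
\end{itemize}
The larger $\eps_I$ also needs the admissibility check of Lemma~\ref{lem:epsI-valid}, namely that every density $1+\eps_I\gamma_v'$ in Scheder's irregular construction stays nonnegative for $\eps_I\le1/5$. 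Without the retuning your argument proves only a bonus of about $6.57\cdot10^{-5}$.

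\textbf{Sign errors in your structural translation.}
\begin{itemize}
\item Lemma~34 together with $|J_1\cap\TwoCC|+2|J_0\cap\TwoCC|\le2|\TwoCC|$ gives $|H|\ge n-|\ID_1|-2|\ID_0|-2|\TwoCC|$. That is a \emph{negative} multiple of $|\TwoCC|$, not a nonnegative one.
\item Combined with $\tfrac{18}{17}|H_{\rm low}|+2|H_{\rm high}|+3|\TwoCC|\ge|H|$, this yields $S=c_T-5A$. So $S>0$ holds because the regular estimate's own $c_T|\TwoCC|$ term dominates $5A$, not because $|H|$ contributes positively.
\item $b_T$ is read directly off Scheder's Section~8.4 bound, which is already in the $\ID_0,\ID_1,\TwoCC$ coordinates. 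Its sign depends on $\eps_I$: it is positive at $0.029$ and negative only at the new value.
\end{itemize}

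\textbf{Order of limits.} The finite-strength losses $\xi_X(w)n$ are linear in $n$ for fixed $w$, so they cannot be absorbed into $n_0$ as $o(n)$ terms. You must first fix the slack $\Delta=\gamma_*-\gamma_{\mathrm{new}}>0$, then choose $w_0$ with $\xi_X(w_0)<\Delta/4$, and only then choose $n_0$. Monotonicity of $\Pw$ in $w$ covers all $w\ge w_0$.
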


Here $\PPSZ_w$ denotes standard uniform-order, unbiased-guessing PPSZ with implication strength $w$; Section~\ref{sec:finite-strength} explains the equivalent bounded-width implementation.  The fixed-parameter certificate gives
\[
 \gamma_*=0.000068779380458836\ldots,
\]
and the theorem uses the strictly smaller decimal $\gamma_{\mathrm{new}}$.  In particular,
\[
 0.0000687793
 >0.000065711657247995\ldots,
\]
so Theorem~\ref{thm:main} strictly improves Scheder's unique-case exponent.

Applying the existing unique-to-general theorem of Scheder and Steinberger~\cite{SchederSteinberger} to the new unique-case bonus gives the following corollary.  The lifting theorem is used without modification; only its numerical instantiation changes.

\begin{corollary}[General $3$-SAT via Scheder--Steinberger]\label{cor:general}
There exist a finite implication strength $w_1$ and an integer $n_1$ such that, for every $w\ge w_1$, every $n\ge n_1$, and every satisfiable 3-CNF formula $F$ on $n$ variables,
\[
 \Prb[\PPSZ_w(F)\text{ succeeds}]
 \ge 2^{-p_0n+\eta n},
 \qquad
 \eta=0.000000364.
\]
Consequently, general $3$-SAT can be solved by repeated runs of the original PPSZ algorithm in randomized time
\[
 O^*(1.307031578^n).
\]
\end{corollary}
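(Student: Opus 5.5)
The plan is to treat Corollary~\ref{cor:general} as a pure instantiation of the Scheder--Steinberger lifting theorem~\cite{SchederSteinberger}, used as a black box. First, I will restate that theorem in the form needed here. For $k=3$ it says roughly the following. Suppose a unique-case bound $\Prb[\PPSZ_w(F)=\alpha]\ge 2^{-p_0n+\gamma n}$ holds for all Unique-$3$-SAT instances, for all sufficiently large $w$ and $n$. Then every satisfiable $3$-CNF satisfies $\Prb[\PPSZ_w(F)\text{ succeeds}]\ge 2^{-p_0n+\eta(\gamma)n}$. Here $\eta(\gamma)>0$ is given explicitly by an optimization in their paper.

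I will check three points carefully. The first is that the hypothesis of their theorem really has the form supplied by Theorem~\ref{thm:main}: uniform in $w\ge w_0$, for the same uniform-order, unbiased PPSZ, with the same $p_0$. The second is how their proof passes between unique and non-unique subformulas. It uses the unique-case bound on restrictions $F|_\rho$ with fewer variables, so I need the threshold $n_0$ to be absorbed. I plan to handle this by taking $n_1$ large and accepting an $o(n)$ loss, which disappears inside the slack between the computed value and the stated decimal $\eta=0.000000364$. The third is any dependence on the implication strength: they may need a larger $w$ for the restricted formulas, so I will set $w_1=\max(w_0,\cdot)$ as their theorem dictates.

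The core step is to compute $\eta(\gamma_{\mathrm{new}})$. I will write out their explicit formula, or the finite optimization over their case-split parameter that balances the "many satisfying assignments / cheap guessing" regime against the "isolated-ish assignment / inherited unique bonus" regime. I will then evaluate it at $\gamma=0.0000687793$ and confirm that the result is at least $0.000000364$. As a sanity check, I will plug $\gamma_{\mathrm{old}}=1/15218$ into the same formula and recover their published general bonus, which corresponds to $1.307031594^n$. This makes sure I have transcribed the instantiation correctly.

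Monotonicity of $\eta$ in $\gamma$ is needed so that using the rounded-down decimal $\gamma_{\mathrm{new}}$ is legitimate. I expect this to be clear from the formula, but I will verify it. I would certify the inequality by exact rational interval arithmetic, as is done for the main theorem.

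Finally, the running time follows by repetition. Running PPSZ independently $O^*(2^{(p_0-\eta)n})$ times gives constant success probability, and $2^{p_0-\eta}=2^{2\ln2-1-0.000000364}$ must be checked, again with rigorous interval bounds, to be at most $1.307031578$. The main obstacle I expect is the exact form of $\eta(\gamma)$ in~\cite{SchederSteinberger}. If it is an implicit optimum rather than a closed form, I will exhibit an explicit feasible choice of their parameters achieving the claimed $\eta$. A lower bound needs only feasibility, so no optimality argument is required.
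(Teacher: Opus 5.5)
Your plan is correct and matches the paper's proof: apply the Scheder--Steinberger lifting (in the paper, Proposition~\ref{prop:quant-lift}, with lifted bonus $\eta_\infty(\gamma)=\max_{0\le\delta\le1/2}\min\{q_0\delta,u_\gamma(\delta)\}$) to $\gamma_{\mathrm{new}}$, certify by exact interval arithmetic at an explicit separator $\delta_0$ that both branches exceed $\eta=0.000000364$ with slack, and check $2^{p_0-\eta}<1.307031578$. The one point to make explicit is that the finite-strength discrepancy $\eps_w n$ is linear, not $o(n)$, for fixed $w$, so the slack must absorb it by fixing $\delta$ first, then $w$, and only then letting $n\to\infty$; also, monotonicity of $\eta$ in $\gamma$ is not needed here, since Theorem~\ref{thm:main} is already stated at $\gamma_{\mathrm{new}}$.
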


Applying the same lifting calculation to the old and new unique-case bonuses gives the following limiting values.
\begin{center}
\small
\renewcommand{\arraystretch}{1.15}
\begin{tabular}{@{}ccc@{}}
\toprule
unique-case bonus & lifted general-case bonus & general-$3$-SAT base\\
\midrule
$0.000065711657247995\ldots$
 & $0.0000003465837065\ldots$
 & $1.307031593709762\ldots$\\
$0.0000687793$
 & $0.0000003640269421\ldots$
 & $1.307031577906796\ldots$\\
\bottomrule
\end{tabular}
\end{center}
The second row has a strictly larger lifted bonus and a strictly smaller running-time base:
\begin{align*}
 0.0000003640269421\ldots
 &>0.0000003465837065\ldots,\\
 1.307031577906796\ldots
 &<1.307031593709762\ldots.
\end{align*}
Both rows use the original PPSZ algorithm and the same Scheder--Steinberger lifting theorem; they differ only in the unique-case exponent supplied to that theorem.  Scheder and Steinberger identify PPSZ as the fastest known algorithm for $k$-SAT, and a recent account likewise treats Scheder's PPSZ analysis as the state of the art for worst-case $3$-SAT~\cite{SchederSteinberger,AttiasGaoReyzin}.  Since Corollary~\ref{cor:general} strictly lowers that general-$3$-SAT base, it gives the best currently known worst-case randomized running-time bound for general $3$-SAT.

Scheder's regular and irregular estimates, the structural graph inequalities, the change-of-measure argument, and the lifting theorem are used as published.  The new step is the common-coordinate recombination and its dual certificate; the general-case number is the resulting numerical corollary.  Appendix~\ref{app:arithmetic} gives the exact interval checks.

\section{Imported estimates and finite-strength conventions}\label{sec:inputs}

\subsection{Algorithmic convention and order of limits}\label{sec:finite-strength}

Let $\Pw$ be the weak implication heuristic that infers $x=b$ when some set of at most $w$ residual clauses implies $x=b$.  We write $\PPSZ_w$ for the corresponding random decoder.  The heuristic is sound and monotone under restrictions.  For fixed $w$, one run takes $n^{O(w)}$ time and polynomial space.

A set of at most $w$ clauses of a 3-CNF contains at most $3w$ variables.  Resolution completeness on those variables gives a derivation of width at most $3w$ for every implication certified by $\Pw$.  Hence the standard bounded-width implementation of original PPSZ at width $3w$ forces every variable forced by $\Pw$ and has at least the same success probability.

We keep the finite-strength dependence separate from the limit $n\to\infty$.  Paturi et al.'s error bound, in the notation of Scheder and Steinberger, is
\begin{equation}\label{eq:pw-error}
 p_w=p_0+\eps_w,
 \qquad
 \eps_w\ge0,
 \qquad
 \eps_w\longrightarrow0
 \quad(w\to\infty).
\end{equation}
For the regular and irregular estimates below, there are nonnegative functions $\xi_R(w),\xi_I(w)\to0$ and nonnegative remainders $r_{R,w}(n),r_{I,w}(n)=o(n)$ for each fixed $w$.  Their contributions to the exponent are
\[
 \xi_X(w)n+r_{X,w}(n),
 \qquad X\in\{R,I\}.
\]
For fixed $w$, the term $\xi_X(w)n$ is linear and is not part of $o(n)$.  Accordingly, the target exponent is fixed before $w$ is chosen, and the limit $n\to\infty$ is taken only after fixing $w$.

By complementing variables, we normalize the unique satisfying assignment to the all-one assignment.  Choose one canonical critical clause $(x\vee\bar y\vee\bar z)$ for each variable $x$ and put arcs $x\to y$ and $x\to z$ in the critical-clause graph.  Let $J_i$ be the indegree-$i$ class, and let $\TwoCC$ be the set of variables having at least two critical clauses.

\subsection{Change of measure and notation}

All logarithms in the coefficient functions are natural.  For $0\le t\le1$, define
\[
 f_{\rm KL}(t)=(1-t)\ln(1-t)+t,
\]
with $0\ln0=0$.  Let $U$ be uniform on variable placements and let $D\ll U$ be one of Scheder's auxiliary distributions.  If $\mathrm{Forced}(\pi)$ is the number of variables inferred when the run follows the unique satisfying assignment, then
\begin{equation}\label{eq:change}
 \E_{\pi\sim U}\!\left[2^{-n+\mathrm{Forced}(\pi)}\right]
 \ge
 2^{-n+\E_{\pi\sim D}[\mathrm{Forced}(\pi)]
      -\mathrm{KL}_2(D\|U)}.
\end{equation}
This is Equation~(2) of the full version~\cite{SchederFull} and Equation~(3) of the journal version~\cite{SchederJournal}.

The following table reconciles the two notation conventions used in Scheder's full proof.
\begin{center}
\small
\renewcommand{\arraystretch}{1.12}
\begin{tabular}{@{}>{\raggedright\arraybackslash}p{0.16\linewidth}>{\raggedright\arraybackslash}p{0.54\linewidth}>{\raggedright\arraybackslash}p{0.21\linewidth}@{}}
\toprule
symbol & meaning & source convention\\
\midrule
$H$ & selected sibling-graph subgraph of maximum degree at most two & Section~6\\
$H_{\rm low},H_{\rm high}$ & selected low- and high-label-density edges of $H$ & Section~7\\
$\TwoCC$ & variables having at least two critical clauses & Sections~6--8\\
$J_i$ & all indegree-$i$ variables in the critical-clause graph & called $\ID_i$ in Section~6\\
$\ID_i=J_i\setminus\TwoCC$ & indegree-$i$ variables outside $\TwoCC$ & Section~8; used here\\
\bottomrule
\end{tabular}
\end{center}

\subsection{Regular and irregular estimates}

\begin{inputlemma}[Regular lower bound from Scheder]\label{lem:regular}
For every fixed $0\le\eps_R\le0.13$, every fixed $\Thr>0$, and every admissible finite strength $w$,
\[
 \Prb[\PPSZ_w(F)=\alpha]
 \ge
 2^{-p_0n+\gain_R-\xi_R(w)n-r_{R,w}(n)},
\]
where
\begin{align}
 \gain_R
 &\ge
 (0.001687\eps_R-0.006404\eps_R^2)|H_{\rm low}|
 +0.9\Thr|H_{\rm high}| \notag\\
 &\quad
 +(0.009307-0.055\eps_R-0.1503f_{\rm KL}(\eps_R))|\TwoCC|
 -1.1\eps_R\Thr n.
 \label{eq:regular}
\end{align}
\end{inputlemma}

Equation~\eqref{eq:regular} is the final coefficient inequality in Section~7.8 of~\cite{SchederFull}.  Scheder substitutes $\eps_R=0.1$ in his final simplification; we use the inequality before that substitution.

\begin{inputlemma}[Irregular lower bound from Scheder]\label{lem:irregular}
For every fixed $0\le\eps_I\le1/5$ and every admissible finite strength $w$,
\[
 \Prb[\PPSZ_w(F)=\alpha]
 \ge
 2^{-p_0n+\gain_I-\xi_I(w)n-r_{I,w}(n)},
\]
where
\begin{equation}\label{eq:irregular}
 \gain_I
 \ge
 b_1(\eps_I)|\ID_1|+b_0(\eps_I)|\ID_0|+b_T(\eps_I)|\TwoCC|
\end{equation}
and
\begin{align}
 b_1(\eps)&=0.030966\eps-0.0028\eps^2-0.4027f_{\rm KL}(\eps),
 \label{eq:b1}\\
 b_0(\eps)&=0.06259\eps-0.344f_{\rm KL}(\eps),
 \label{eq:b0}\\
 b_T(\eps)&=0.009307-0.2405\eps-0.03125\eps^2
             -0.06183f_{\rm KL}(5\eps).
 \label{eq:bT}
\end{align}
\end{inputlemma}

Equation~\eqref{eq:irregular} is the final lower bound in Section~8.4 of~\cite{SchederFull}, before the substitution $\eps_I=0.029$.  Appendix~\ref{app:provenance} verifies that the larger value used here satisfies the source-side admissibility conditions.

\subsection{Structural inequalities}

Scheder's sibling-graph construction gives
\begin{equation}\label{eq:structure-H}
 \frac{18}{17}|H_{\rm low}|+2|H_{\rm high}|+3|\TwoCC|
 \ge |H|.
\end{equation}
His Lemma~34 gives
\[
 |H|\ge n-|J_1|-2|J_0|.
\]
Since $J_i=\ID_i\mathbin{\dot\cup}(J_i\cap\TwoCC)$ for $i\in\{0,1\}$ and
\[
 |J_1\cap\TwoCC|+2|J_0\cap\TwoCC|
 \le2|\TwoCC|,
\]
we obtain
\begin{equation}\label{eq:H-lower}
 |H|\ge n-|\ID_1|-2|\ID_0|-2|\TwoCC|.
\end{equation}
Both inequalities are imported from Scheder's analysis.

\section{Common-coordinate recombination}\label{sec:combination}

\subsection{The two affine bounds}

Normalize
\[
 i_0=\frac{|\ID_0|}{n},
 \qquad
 i_1=\frac{|\ID_1|}{n},
 \qquad
 \tau=\frac{|\TwoCC|}{n}.
\]
For a fixed regular parameter $\eps_R$, put
\[
 c_L=0.001687\eps_R-0.006404\eps_R^2,
 \qquad
 c_T=0.009307-0.055\eps_R-0.1503f_{\rm KL}(\eps_R),
\]
and define
\begin{equation}\label{eq:AThr}
 A=\frac{17}{18}c_L,
 \qquad
 \Thr=\frac{2A}{0.9}.
\end{equation}
Then the coefficients of $|H_{\rm low}|$ and $|H_{\rm high}|$ in~\eqref{eq:regular} are $(18/17)A$ and $2A$.  Equations~\eqref{eq:structure-H} and~\eqref{eq:H-lower} therefore give
\begin{equation}\label{eq:Rlinear}
 \frac{\gain_R}{n}
 \ge
 \Lreg(i_0,i_1,\tau)
 :=A(1-i_1-2i_0)-\Preg+S\tau,
\end{equation}
where
\begin{equation}\label{eq:pS}
 \Preg=1.1\eps_R\Thr,
 \qquad
 S=c_T-5A.
\end{equation}
The term $5A$ consists of $3A$ from~\eqref{eq:structure-H} and a further $2A$ from~\eqref{eq:H-lower}.

For a fixed irregular parameter $\eps_I$, Equations~\eqref{eq:irregular}--\eqref{eq:bT} already give
\begin{equation}\label{eq:Ilinear}
 \frac{\gain_I}{n}
 \ge
 \Lirr(i_0,i_1,\tau)
 :=b_0i_0+b_1i_1+b_T\tau.
\end{equation}

\subsection{Dual certificate}

For fixed coefficients, consider
\begin{equation}\label{eq:minimax}
 \Gamma=
 \inf_{i_0,i_1,\tau\ge0}
 \max\{\Lreg(i_0,i_1,\tau),\Lirr(i_0,i_1,\tau)\}.
\end{equation}
The associated epigraph linear program has dual
\begin{equation}\label{eq:dual-lp}
\begin{aligned}
 \text{maximize}\quad &y_R(A-\Preg)\\
 \text{subject to}\quad
 &y_R+y_I=1,
 \qquad y_R,y_I\ge0,\\
 &-2Ay_R+b_0y_I\ge0,\\
 &-Ay_R+b_1y_I\ge0,\\
 &Sy_R+b_Ty_I\ge0.
\end{aligned}
\end{equation}
Setting $y_R=\lambda/(1+\lambda)$ and $y_I=1/(1+\lambda)$ yields the following explicit certificate.

\begin{proposition}[Affine minimax certificate]\label{prop:certificate}
Suppose $A>\Preg$, $\lambda\ge0$, and
\begin{equation}\label{eq:symbolic-conditions}
 b_0\ge2\lambda A,
 \qquad
 b_1\ge\lambda A,
 \qquad
 b_T+\lambda S\ge0.
\end{equation}
Then, for every $i_0,i_1,\tau\ge0$,
\begin{equation}\label{eq:symbolic-gamma}
 \max\{\Lreg,\Lirr\}
 \ge
 \frac{\lambda(A-\Preg)}{1+\lambda}.
\end{equation}
\end{proposition}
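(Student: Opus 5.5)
The plan is to bound the maximum from below by a convex combination of the two affine functions. Since $\lambda\ge0$, the weights $y_R=\lambda/(1+\lambda)$ and $y_I=1/(1+\lambda)$ are nonnegative and sum to one, so for every $i_0,i_1,\tau\ge0$ we have
\[
 \max\{\Lreg,\Lirr\}\ge \frac{\lambda\Lreg+\Lirr}{1+\lambda}.
\]
It therefore suffices to show that the numerator $\lambda\Lreg+\Lirr$ is at least $\lambda(A-\Preg)$ on the nonnegative orthant.

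Next I expand the numerator using the definitions \eqref{eq:Rlinear} and \eqref{eq:Ilinear}:
\[
 \lambda\Lreg+\Lirr
 =\lambda(A-\Preg)+(b_0-2\lambda A)i_0+(b_1-\lambda A)i_1+(b_T+\lambda S)\tau .
\]
By \eqref{eq:symbolic-conditions}, each of the three coefficients multiplying $i_0$, $i_1$ and $\tau$ is nonnegative. The variables themselves are nonnegative, so each of these three terms is nonnegative and can be dropped. This leaves $\lambda\Lreg+\Lirr\ge\lambda(A-\Preg)$. Dividing by $1+\lambda>0$ then gives \eqref{eq:symbolic-gamma}.

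The hypothesis $A>\Preg$ is not actually needed for the inequality itself. It only guarantees that the resulting bound is positive, which makes the certificate meaningful as a dual-feasible point of \eqref{eq:dual-lp} with positive objective value. The argument is weak LP duality made explicit, and no step poses a real obstacle. The only care required is bookkeeping: checking that the coefficients of $\Lreg$ are exactly $-2A$ on $i_0$, $-A$ on $i_1$ and $S$ on $\tau$, with constant term $A-\Preg$.
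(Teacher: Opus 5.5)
Your proposal is correct and matches the paper's proof: bound the maximum by the convex combination $(\lambda\Lreg+\Lirr)/(1+\lambda)$, expand it, and drop the three terms whose coefficients are nonnegative by the hypotheses. Your side remark is also right---the hypothesis $A>\Preg$ is not used in the inequality itself; it only makes the resulting bound positive.
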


\begin{proof}
Since $\lambda\ge0$,
\begin{align*}
 \max\{\Lreg,\Lirr\}
 &\ge \frac{\lambda\Lreg+\Lirr}{1+\lambda}\\
 &=\frac{1}{1+\lambda}
 \bigl(
   \lambda(A-\Preg)
   +(b_0-2\lambda A)i_0
   +(b_1-\lambda A)i_1
   +(b_T+\lambda S)\tau
 \bigr).
\end{align*}
Every variable coefficient is nonnegative by~\eqref{eq:symbolic-conditions}.
\end{proof}

\subsection{Fixed parameters and certified value}

We use the exact rational decimals
\begin{align}
 \eps_R&=0.1024756190168075228998451658,
 \notag\\
 \eps_I&=0.07307238160252154687451293138.
 \label{eq:parameter-values}
\end{align}
An exploratory numerical search produced these parameters; the proof uses only the fixed decimals in~\eqref{eq:parameter-values}.  Exact rational interval evaluation at these fixed inputs gives
\begin{align}
 A&\in[9.97582178549,9.97582178550]\cdot10^{-5},
 &\Preg&\in[2.49890303097,2.49890303098]\cdot10^{-5},
 \notag\\
 S&\in[0.00235445147822,0.00235445147823],
 &b_1&\in[0.00114549739595,0.00114549739597],
 \notag\\
 b_0&\in[0.00363196877285,0.00363196877287],
 &b_T&\in[-0.01318180201459,-0.01318180201458].
 \label{eq:coefficient-intervals}
\end{align}
In particular, $A>\Preg>0$, $S>0$, $b_0,b_1>0$, and $b_T<0$.

Take
\begin{equation}\label{eq:lambdadef}
 \lambda=\frac{b_1}{A}=11.4827371678\ldots.
\end{equation}
The $i_1$ constraint is tight, and the remaining dual margins are
\begin{align}
 b_0-2\lambda A
 &=b_0-2b_1
 =0.001340973980937947\ldots>0,
 \label{eq:cert1}\\
 b_T+\lambda S
 &=0.013853745484230647\ldots>0.
 \label{eq:cert2}
\end{align}
Proposition~\ref{prop:certificate} gives
\begin{equation}\label{eq:gamma-star}
 \Gamma\ge
 \gamma_*
 :=\frac{\lambda(A-\Preg)}{1+\lambda}
 =\frac{b_1(A-\Preg)}{A+b_1}
 =0.000068779380458836\ldots.
\end{equation}
The certificate is tight for the relaxation~\eqref{eq:minimax}: setting $i_0=\tau=0$ and
\begin{equation}\label{eq:tight-point}
 i_1=\frac{A-\Preg}{A+b_1}
 =0.060043244708778326\ldots
\end{equation}
gives $\Lreg=\Lirr=\gamma_*$.  This point also satisfies $i_0+i_1+\tau<1$.  We do not assert that this point is realized by a formula or that the displayed parameters are globally optimal once all structural constraints are imposed.

\subsection{Proof of Theorem~\ref{thm:main}}

Let
\[
 \Delta=\gamma_*-\gamma_{\mathrm{new}}>0.
\]
Choose a finite strength $w_0$ such that
\[
 \max\{\xi_R(w_0),\xi_I(w_0)\}<\frac{\Delta}{4}.
\]
For this fixed $w_0$, choose $n_0$ so that, for $n\ge n_0$,
\[
 \max\{r_{R,w_0}(n),r_{I,w_0}(n)\}<\frac{\Delta n}{4}.
\]
The two imported estimates bound the same success probability.  Hence
\begin{align*}
 \log_2\Prb[\PPSZ_{w_0}(F)=\alpha]
 &\ge
 -p_0n+
 \max\{\gain_R,\gain_I\}
 -\max_{X\in\{R,I\}}\bigl(\xi_X(w_0)n+r_{X,w_0}(n)\bigr)\\
 &\ge
 -p_0n+\gamma_*n-\frac{\Delta n}{2}\\
 &\ge
 -p_0n+\gamma_{\mathrm{new}}n.
\end{align*}
If $w\ge w_0$, the heuristic $\mathcal P^{(w)}$ can only force additional variables, so the same lower bound holds for $\PPSZ_w$.  One run has polynomial cost for fixed $w$, and repetition requires
\[
 O^*\!\left(2^{(p_0-\gamma_{\mathrm{new}})n}\right)
\]
time.  Exact interval arithmetic gives the strict inequality
\[
 2^{p_0-0.0000687793}
 <1.306969598.
\]

\section{Quantitative lifting to general 3-SAT}\label{sec:lifting}

We now specialize the lifting theorem of Scheder and Steinberger~\cite{SchederSteinberger}, keeping the finite-strength error explicit in both branches.

Let
\begin{equation}\label{eq:pstar-q0}
 p_*=\frac{2-\log_2 e}{2}
 =1-\frac{1}{2\ln2},
 \qquad
 q_0=p_0-p_*
 =0.107641881564372\ldots.
\end{equation}
For $0\le\delta\le1$, let
\[
 \hbin(\delta)
 =-\delta\log_2\delta-(1-\delta)\log_2(1-\delta),
\]
with the continuous endpoint convention $0\log_2 0=0$.

\begin{importedtheorem}[Scheder--Steinberger]\label{thm:SS-imported}
Let $\mathcal P$ be a monotone proof heuristic of error at most $p\ge p_*$ on a formula class closed under restrictions.  Let $Q$ be the distribution induced by the complete proof heuristic and let $I$ be the number of variables that are liquid when processed, in the notation of~\cite{SchederSteinberger}.  For every satisfiable formula $F$ on $n$ variables,
\begin{equation}\label{eq:SS-main}
 \Prb[\operatorname{RandomDecode}(F,\mathcal P)\text{ succeeds}]
 \ge
 2^{-pn+(p-p_*)\E_Q[I]}.
\end{equation}
Moreover, if $\E_Q[I]\le\delta n$, there is a restriction of at most $\delta n$ variables, consistent with a satisfying assignment, whose residual formula is uniquely satisfiable.
\end{importedtheorem}

Equation~\eqref{eq:SS-main} is Main Theorem~1.17 of~\cite{SchederSteinberger}, and the corresponding unique-to-general statement is their Lifting Theorem~1.18.  Appendix~\ref{app:lifting-proof} gives the finite-strength specialization and the conditioning argument for realizing the favorable restriction as a PPSZ prefix.

For a unique-case bonus $\gamma>0$, define
\begin{equation}\label{eq:u-gamma}
 u_\gamma(\delta)
 =\gamma(1-\delta)-(1-p_0)\delta-\hbin(\delta)
\end{equation}
and
\begin{equation}\label{eq:eta-infty-def}
 \eta_\infty(\gamma)
 =\max_{0\le\delta\le1/2}
 \min\{q_0\delta,u_\gamma(\delta)\}.
\end{equation}
The first branch is the gain from~\eqref{eq:SS-main} when $\E_Q[I]\ge\delta n$.  The second is the exact exponent obtained by placing a restriction of $\delta n$ variables first, guessing its unforced values correctly, and applying the unique-case bound on the remaining $(1-\delta)n$ variables.

\begin{proposition}[Quantitative specialization of the lifting theorem]\label{prop:quant-lift}
Suppose that, for some $\gamma>0$, original PPSZ has the following fixed-strength unique-case bound: there are $w_U,m_U$ such that for every $w\ge w_U$ and every uniquely satisfiable 3-CNF formula $G$ on $m\ge m_U$ variables,
\[
 \Prb[\PPSZ_w(G)\text{ finds its unique solution}]
 \ge 2^{-p_0m+\gamma m}.
\]
Then for every $\eta<\eta_\infty(\gamma)$ there are finite $w_G,n_G$ such that, for every $w\ge w_G$, every $n\ge n_G$, and every satisfiable 3-CNF formula $F$ on $n$ variables,
\[
 \Prb[\PPSZ_w(F)\text{ succeeds}]
 \ge 2^{-p_0n+\eta n}.
\]
\end{proposition}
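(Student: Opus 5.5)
Fix $\eta<\eta_\infty(\gamma)$ and pick $\delta\in[0,1/2]$ attaining (or nearly attaining) the maximum in~\eqref{eq:eta-infty-def}, so that $q_0\delta>\eta$ and $u_\gamma(\delta)>\eta$. Put $\Delta'=\min\{q_0\delta,u_\gamma(\delta)\}-\eta>0$. Everything then reduces to choosing the strength and the size threshold so that finite-strength and lower-order losses are at most $\Delta' n$. First I will choose $w_G\ge w_U$ so large that $\eps_{w_G}$ from~\eqref{eq:pw-error} satisfies $\eps_{w_G}(1+\delta)<\Delta'/4$, and keep this $w$ fixed. Only afterwards will I let $n\to\infty$, following the order of limits of Section~\ref{sec:finite-strength}. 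Monotonicity of $\Pw$ in $w$ then transfers the bound from $w_G$ to every $w\ge w_G$, exactly as in the proof of Theorem~\ref{thm:main}. The proof splits into two cases according to $\E_Q[I]$.

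\emph{Case 1: $\E_Q[I]\ge\delta n$.} Apply Theorem~\ref{thm:SS-imported} with $\mathcal P=\Pw$ and error $p=p_w=p_0+\eps_w\ge p_0>p_*$. This gives the exponent $-p_wn+(p_w-p_*)\delta n\ge -p_0n-\eps_wn+q_0\delta n$. By the choice of $w$, this is at least $-p_0n+\eta n$. Since RandomDecode with $\Pw$ is $\PPSZ_w$ (citing the convention of Section~\ref{sec:finite-strength}), this case is finished.

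\emph{Case 2: $\E_Q[I]<\delta n$.} The second part of Theorem~\ref{thm:SS-imported} supplies a set $V$ of $k\le\delta n$ variables and values $\beta$ consistent with a satisfying assignment such that $G=F|_\beta$ is uniquely satisfiable on $m=n-k\ge(1-\delta)n$ variables (variables of $G$ not occurring in any clause would contradict uniqueness, so $G$ genuinely has $m$ variables). Lower-bound the PPSZ success probability by the following event: the random order places exactly the variables of $V$ first, every unforced variable of $V$ receives the guessed value prescribed by $\beta$, and the subsequent run on $G$ succeeds.
\begin{itemize}
\item The ordering event has probability $1/\binom{n}{k}\ge 2^{-\hbin(k/n)n-O(\log n)}$.
\item The guesses on $V$ cost at most $2^{-k}$. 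A variable of $V$ that is forced must be forced to its $\beta$-value by soundness, since $\beta$ extends to a satisfying assignment.
\item Conditioned on the prefix, the remaining order is uniform on $G$'s variables. The residual formula seen by PPSZ is exactly $G$, so the unique-case hypothesis gives $2^{-p_0m+\gamma m}$ once $m\ge m_U$.
\end{itemize}
Multiplying, the exponent is $-k-\hbin(k/n)n-p_0(n-k)+\gamma(n-k)-O(\log n)=-p_0n+u_\gamma(k/n)n-O(\log n)$.

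The main obstacle is that $k/n$ is only bounded by $\delta$, not equal to it, so I need $u_\gamma(k/n)\ge u_\gamma(\delta)$ for $k/n\le\delta$. On $[0,1/2]$, $u_\gamma$ need not be monotone (it increases near $0$ only if $\hbin$ doesn't dominate — in fact $\hbin'\to\infty$, so $u_\gamma$ decreases near $0$). I would handle this by noting $u_\gamma$ is concave (sum of linear and $-\hbin$). Then $u_\gamma$ on $[0,\delta]$ is at least $\min\{u_\gamma(0),u_\gamma(\delta)\}=\min\{\gamma,u_\gamma(\delta)\}$, and $u_\gamma(\delta)\le\gamma$ because every subtracted term is nonnegative. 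So $u_\gamma(k/n)\ge u_\gamma(\delta)\ge\eta+\Delta'$. Alternatively, pad $V$ with further variables forced to their values in the unique solution of $G$, to reach exactly $\lfloor\delta n\rfloor$ variables; concavity makes this unnecessary. Finally, choose $n_G$ so that $(1-\delta)n_G\ge m_U$, the $O(\log n)$ term is below $\Delta' n/2$, and in Case 1 any $o(n)$ slack is absorbed. This yields $\Prb[\PPSZ_w(F)\text{ succeeds}]\ge2^{-p_0n+\eta n}$ in both cases.
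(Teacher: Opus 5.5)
Your argument has the same structure as the paper's. You split on whether $\E_Q[I]\ge\delta n$, use the quantifier order $\eta\to\delta\to w\to n$, bound the prefix-first event by $\binom{n}{k}^{-1}2^{-k}$ times the unique-case bound on $G$, and transfer to $w\ge w_G$ by monotonicity. In Case~1 you discard the nonnegative term $\eps_w\delta n$, so your condition on $\eps_{w_G}$ is stronger than needed but valid.

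The step you single out as the main obstacle, however, rests on a false claim. Since $\hbin$ is concave, $-\hbin$ is convex. Hence $u_\gamma''(\delta)=\frac{1}{\ln 2}\bigl(\frac1\delta+\frac1{1-\delta}\bigr)>0$, and $u_\gamma$ is \emph{convex}, not concave. The bound "a function on $[0,\delta]$ is at least the smaller of its endpoint values" holds for concave functions and fails for convex ones. So your derivation of $u_\gamma(k/n)\ge u_\gamma(\delta)$ does not go through as written. Your remark that $u_\gamma$ need not be monotone on $[0,1/2]$ is also wrong.

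The inequality itself is true, and the repair is one line. For $0<\delta\le 1/2$ we have $u_\gamma'(\delta)=-\gamma-(1-p_0)-\log_2\frac{1-\delta}{\delta}<0$, so $u_\gamma$ is strictly decreasing there (the paper records this in Section~\ref{sec:lifting}). Then $k/n\le\delta\le 1/2$ gives $u_\gamma(k/n)\ge u_\gamma(\delta)$.

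Alternatively, use the padding you mention and then dismiss, which is what the paper does. Extend the restriction along the unique solution of $G$ to exactly $r=\lfloor\delta n\rfloor$ variables; the residual formula stays uniquely satisfiable. Since $r/n\to\delta$, continuity of $u_\gamma$ gives the bound, and this needs no shape information about $u_\gamma$ at all. With either fix, your proof is complete.
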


The proof appears in Appendix~\ref{app:lifting-proof}.  It first fixes $\eta$ and $\delta$, then chooses a sufficiently large finite $w$, and only afterwards lets $n\to\infty$.  Thus the linear discrepancy $\eps_wn$ in~\eqref{eq:pw-error} is controlled explicitly rather than absorbed into $o(n)$.

For $0<\delta<1/2$, the function $q_0\delta$ is strictly increasing and $u_\gamma(\delta)$ is strictly decreasing.  Hence the maximum in~\eqref{eq:eta-infty-def} occurs at the unique solution $\delta_\gamma\in(0,1/2)$ of
\begin{equation}\label{eq:intersection}
 q_0\delta_\gamma=u_\gamma(\delta_\gamma),
\end{equation}
or equivalently
\begin{equation}\label{eq:root-equation}
 \hbin(\delta_\gamma)
 +(1-p_*+\gamma)\delta_\gamma
 =\gamma.
\end{equation}
Thus
\begin{equation}\label{eq:eta-root}
 \eta_\infty(\gamma)=q_0\delta_\gamma.
\end{equation}
The root is strictly increasing in $\gamma$: implicit differentiation of~\eqref{eq:root-equation} gives
\[
 \frac{d\delta_\gamma}{d\gamma}
 =\frac{1-\delta_\gamma}
 {\log_2((1-\delta_\gamma)/\delta_\gamma)+1-p_*+\gamma}>0.
\]
Consequently, the lifted bonus is strictly increasing in the unique-case bonus.

Exact interval arithmetic certifies the root brackets
\begin{align*}
 \delta_{\gamma_{\mathrm{old}}}
 &\in
 [0.00000321978491531273261,
  0.00000321978491531273262],\\
 \delta_{\gamma_{\mathrm{new}}}
 &\in
 [0.00000338183369577144614,
  0.00000338183369577144615].
\end{align*}
They imply
\begin{align}
 \eta_\infty(\gamma_{\mathrm{old}})
 &\in
 [0.0000003465837065,
  0.0000003465837066],
 \label{eq:eta-old-interval}\\
 \eta_\infty(\gamma_{\mathrm{new}})
 &\in
 [0.0000003640269421,
  0.0000003640269422].
 \label{eq:eta-new-interval}
\end{align}
The corresponding limiting bases satisfy
\begin{align*}
 2^{p_0-\eta_\infty(\gamma_{\mathrm{old}})}
 &<1.307031593710,\\
 2^{p_0-\eta_\infty(\gamma_{\mathrm{new}})}
 &<1.307031577907.
\end{align*}

\begin{proof}[Proof of Corollary~\ref{cor:general}]
Apply Proposition~\ref{prop:quant-lift} to Theorem~\ref{thm:main} with $\gamma=\gamma_{\mathrm{new}}$.  By~\eqref{eq:eta-new-interval},
\[
 \eta=0.000000364
 <\eta_\infty(\gamma_{\mathrm{new}}).
\]
At the fixed rational separator
\[
 \delta_0=0.00000338183369,
\]
exact interval arithmetic shows that both limiting branches exceed $\eta$ by more than $2.69\cdot10^{-11}$.  Proposition~\ref{prop:quant-lift} therefore supplies a sufficiently large finite implication strength and the asserted success exponent.  Finally,
\[
 2^{p_0-0.000000364}
 =1.307031577931205\ldots
 <1.307031578.
\]
\end{proof}

\appendix

\section{Imported inputs and parameter admissibility}\label{app:provenance}

The following table lists the source of each analytic input used in the proof.
\begin{center}
\small
\renewcommand{\arraystretch}{1.12}
\begin{tabular}{@{}>{\raggedright\arraybackslash}p{0.25\linewidth}>{\raggedright\arraybackslash}p{0.37\linewidth}>{\raggedright\arraybackslash}p{0.29\linewidth}@{}}
\toprule
input & source location & use here\\
\midrule
change of measure & Scheder full version Eq.~(2); journal Eq.~(3) & Equation~\eqref{eq:change}\\
regular coefficients & full version Section~7.8, final gain display & Imported estimate~\ref{lem:regular}\\
irregular coefficients & full version Section~8.4, final display before $\eps=0.029$ & Imported estimate~\ref{lem:irregular}\\
sibling-graph inequality & full version Eq.~(11) & Equation~\eqref{eq:structure-H}\\
degree-two subgraph bound & full version Lemma~34 and Lemma~A.3 & Equation~\eqref{eq:H-lower}\\
published $1/15218$ endgame & full version Theorems~35--36 and end of Section~6 & Equations~\eqref{eq:old-endgame-inputs}--\eqref{eq:old-endgame}\\
finite-strength error $p_w$ & Paturi et al.; Scheder--Steinberger Theorem~1.10 & Equation~\eqref{eq:pw-error}\\
general-case inequality and lifting & Scheder--Steinberger Main Theorem~1.17 and Lifting Theorem~1.18 & Theorem~\ref{thm:SS-imported} and Proposition~\ref{prop:quant-lift}\\
\bottomrule
\end{tabular}
\end{center}

The coefficient inequalities in Imported estimates~\ref{lem:regular} and~\ref{lem:irregular} are taken from the cited source.  The interval certificate interprets their printed decimals with the source-specified rounding directions.  It verifies the recombination and lifting arithmetic, not the integrals or auxiliary numerical bounds underlying those estimates.

For the regular construction, the source requires $\eps_R\le0.13$, and the value in~\eqref{eq:parameter-values} lies strictly inside that range.  The irregular value is covered by the following elementary admissibility check.

\begin{lemma}[Admissibility of the irregular parameter]\label{lem:epsI-valid}
For $0\le\eps_I\le1/5$, every density in Scheder's irregular construction is nonnegative.  This range also implies the restrictions $\eps_I\le4/5$, $\eps_I\le256/600$, and $5\eps_I\le1$ used elsewhere in the source.  In particular, the value in~\eqref{eq:parameter-values} is admissible.
\end{lemma}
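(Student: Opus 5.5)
The plan is to go back to Scheder's irregular construction in Section~8 of the full version and list every place where $\eps_I$ enters a density, probability, or parameter restriction. The construction perturbs the uniform placement distribution $U$ on a small set of variables. For indegree-$0$ and indegree-$1$ variables in $\ID_0\cup\ID_1$, and for the variables adjacent to them through critical clauses, it shifts placement mass toward earlier or later times by an amount proportional to $\eps_I$. The $\TwoCC$ variables carry a perturbation scaled by $5\eps_I$, which is why $f_{\rm KL}(5\eps)$ appears in $b_T$.

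For each such variable, the auxiliary density is piecewise affine in the placement time $t\in[0,1]$. It has the form $1\pm c\,\eps_I$ on a subinterval, or $1-5\eps_I$ in the $\TwoCC$ case, and is compensated elsewhere so that it still integrates to one. A density of the form $1+c\eps_I g(t)$ with $|g|\le1$ is nonnegative whenever $c\eps_I\le1$. So the first step is to record the largest multiplier $c$ that occurs. From the appearance of $f_{\rm KL}(5\eps)$, I expect this to be $c=5$, and then $c\eps_I\le1$ is exactly $\eps_I\le1/5$. I would verify this piece by piece against the source displays, checking in each case that the minimum of the density over $[0,1]$ is an affine function of $\eps_I$ that stays nonnegative at $\eps_I=1/5$. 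Because every piece is affine in $\eps_I$ and nonnegative at $\eps_I=0$ (where the density is uniform), checking the endpoint $\eps_I=1/5$ is enough by convexity of the interval.

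The second step is the list of side restrictions. Each of $\eps_I\le4/5$, $\eps_I\le256/600\approx0.4267$, and $5\eps_I\le1$ follows from $\eps_I\le1/5$ by direct comparison, since $1/5<256/600<4/5$. These restrictions are also what make $f_{\rm KL}(\eps_I)$ and $f_{\rm KL}(5\eps_I)$ well defined, because their arguments then lie in $[0,1]$. The final step is to note that the fixed value in \eqref{eq:parameter-values} satisfies $0<0.0731<0.2$, so the lemma applies to it.

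The main obstacle is the bookkeeping in the source. I need to be sure that no density in Section~8 carries a coefficient larger than $5$ on $\eps_I$, and that the derivation of \eqref{eq:b1}--\eqref{eq:bT} never silently used $\eps_I$ near $0.029$; for example, a Taylor remainder bound valid only for small $\eps$ would break the argument at the larger value. To handle this, I would trace every inequality in Sections~8.1--8.4 that has an explicit parameter condition and confirm that each condition is one of the four listed. If some estimate turns out to rely on a small-$\eps$ expansion, I would recheck that estimate directly at $\eps_I\le1/5$ rather than at $0.029$.
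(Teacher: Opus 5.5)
\textbf{There is a genuine gap here.} Your treatment of the side restrictions and of the final numerical value is fine. The core claim, however, is not established: you never show that every density is nonnegative.

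You model Scheder's irregular densities as step-like perturbations $1\pm c\,\eps_I$ on subintervals. You then infer $c=5$ from the appearance of $f_{\rm KL}(5\eps)$ in $b_T$. That is not the construction, and the form of a KL term does not determine the minimum of a density.

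By Definition~67 of \cite{SchederFull}, the densities are $1+\eps_I\gamma_v'(r)$. Here $\gamma_v$ is built from the polynomials
$\gamma_{\ID}(r)=10r^2(1-2r)^2$, $\gamma_{p\ID}(r)=\frac{61}{6}r^3(1-2r)^2$ and $\gamma_{\TwoCC}(r)=20r^3(1-2r)$, each extended by zero past $r=1/2$. What is needed is an explicit lower bound on $\gamma_v'$. The paper obtains it by direct computation, writing $x=2r$:
\begin{itemize}
\item $|\phi_{\ID}|=10x(1-x)|1-2x|\le 5/2$;
\item $|\phi_{p\ID}|=\frac{61}{24}x^2(1-x)|3-5x|\le 61/54$;
\item $\phi_{\TwoCC}=20r^2(3-8r)\ge -5$, with equality as $r\to 1/2^-$.
\end{itemize}
The last bound is where the threshold $1/5$ actually comes from. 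Your guess lands on the right constant, but nothing in your argument justifies it.

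Your ``largest multiplier $c$ with $|g|\le 1$'' framing also misses a structural point. For a variable outside $\TwoCC$, the derivative is not a single bump. It is a sum of its own-type and neighbour contributions, $-a\phi_{\ID}+m\phi_{p\ID}$ with $a\in\{0,1\}$ and $m\in\{0,1,2\}$. You therefore have to bound a sum:
\[
-\tfrac52-2\cdot\tfrac{61}{54}=-\tfrac{257}{54}>-5 .
\]
The range of $(a,m)$ matters: with $m=3$, the same estimates would give $-318/54<-5$ and would no longer suffice. Until you identify these functions and this range, ``verify piece by piece against the source'' is a plan, not a proof.

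Your worry about small-$\eps$ Taylor expansions in Sections~8.1--8.4 goes beyond what the lemma asserts. The lemma covers only density nonnegativity and the three listed restrictions.
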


\begin{proof}
Definition~67 of~\cite{SchederFull} gives, for $0\le r\le1/2$,
\[
 \gamma_{\ID}(r)=10r^2(1-2r)^2,
 \quad
 \gamma_{p\ID}(r)=\frac{61}{6}r^3(1-2r)^2,
 \quad
 \gamma_{\TwoCC}(r)=20r^3(1-2r),
\]
extended by zero past $1/2$.  Write their derivatives as $\phi_{\ID}$, $\phi_{p\ID}$, and $\phi_{\TwoCC}$.  With $x=2r\in[0,1]$,
\begin{align*}
 |\phi_{\ID}(r)|
 &=10x(1-x)|1-2x|\le\frac52,\\
 |\phi_{p\ID}(r)|
 &=\frac{61}{24}x^2(1-x)|3-5x|
 \le\frac{61}{54},\\
 \phi_{\TwoCC}(r)
 &=20r^2(3-8r)\ge-5.
\end{align*}
For a variable outside $\TwoCC$, every derivative occurring in Definition~67 has the form
\[
 -a\phi_{\ID}+m\phi_{p\ID},
 \qquad
 a\in\{0,1\},\quad m\in\{0,1,2\},
\]
and is bounded below by
\[
 -\frac52-2\cdot\frac{61}{54}
 =-\frac{257}{54}>-5.
\]
For a variable in $\TwoCC$, the derivative is at least $-5$.  Hence every density $1+\eps_I\gamma_v'$ is nonnegative when $\eps_I\le1/5$.  The remaining source restrictions are weaker, and
\[
 0.07307238160252154687451293138<\frac15.
\]
\end{proof}

Only the fixed parameters in~\eqref{eq:parameter-values} enter the proof; the search that produced them is exploratory.  At Scheder's final parameters $(0.1,0.029)$, the same affine program gives $0.000065719084\ldots$; the fixed parameters above give $0.000068779380\ldots$.  No optimality claim is made for the search.

\section{Exact arithmetic and reproducibility}\label{app:arithmetic}

The verification programs and certificate are available in the project repository.\footnote{\url{https://github.com/jiangxioabai/A-Better-Analysis-For-PPSZ}}
The source archive contains the following numerical artifacts:
\begin{center}
\small
\begin{tabular}{@{}ll@{}}
\toprule
file & role\\
\midrule
\texttt{ppsz\_certificate.json} & fixed rational parameters, root brackets, and rounded targets\\
\texttt{verify\_ppsz\_constants.py} & exact-rational interval checker; performs no search\\
\texttt{verification\_output.txt} & expected successful transcript\\
\bottomrule
\end{tabular}
\end{center}
The certificate version is \texttt{2026-07-12-rational-v6}.

Every decimal in the certificate is parsed as a rational number.  For $1\le y\le2$, with $z=(y-1)/(y+1)$, logarithms are enclosed by
\[
 \ln y
 =2\sum_{j=0}^{N-1}\frac{z^{2j+1}}{2j+1}+R_N,
 \qquad
 0\le R_N\le
 \frac{2z^{2N+1}}{(2N+1)(1-z^2)}.
\]
Exact powers of two reduce every positive rational argument to $[1,2]$.  Exponentials are enclosed by their positive Taylor series and a geometric bound on the tail.  The supplied certificate uses $N=90$ for both series.  All intermediate endpoints are \texttt{fractions.Fraction} objects; decimal conversion occurs only when printing the transcript.

For each lifting root, the certificate supplies a fixed rational bracket.  The checker proves opposite signs for
\[
 \hbin(\delta)+(1-p_*+\gamma)\delta-\gamma
\]
at the two endpoints.  Monotonicity, proved in Section~\ref{sec:lifting}, then certifies the unique root and the corresponding interval for $\eta_\infty(\gamma)$.

The main certified margins are as follows.
\begin{center}
\small
\renewcommand{\arraystretch}{1.13}
\begin{tabular}{@{}ll@{}}
\toprule
claim & certified value or enclosure\\
\midrule
$A-\Preg$ & $0.00007476918754521059\ldots>0$\\
$b_0-2b_1$ & $0.00134097398093794778\ldots>0$\\
$b_T+\lambda S$ & $0.01385374548423064739\ldots>0$\\
$\gamma_*-\gamma_{\mathrm{new}}$ & $8.045883656550355\cdot10^{-11}>0$\\
$\eta_\infty(\gamma_{\mathrm{old}})$ & $[0.0000003465837065,0.0000003465837066]$\\
$\eta_\infty(\gamma_{\mathrm{new}})$ & $[0.0000003640269421,0.0000003640269422]$\\
$\gamma_{\mathrm{new}}-\gamma_{\mathrm{old}}$ & $0.0000030676427520042\ldots>0$\\
$\eta_\infty(\gamma_{\mathrm{new}})-\eta_\infty(\gamma_{\mathrm{old}})$ & $0.0000000174432356\ldots>0$\\
high-branch margin at $\delta_0$ & $2.6941529384\cdot10^{-11}>0$\\
unique-residual margin at $\delta_0$ & $2.7050581864\cdot10^{-11}>0$\\
unique-case base & $1.306969597516246\ldots<1.306969598$\\
old limiting general base & $1.307031593709762\ldots<1.307031593710$\\
new limiting general base & $1.307031577906796\ldots<1.307031577907$\\
safe theorem base & $1.307031577931205\ldots<1.307031578$\\
\bottomrule
\end{tabular}
\end{center}

The checker verifies the dual inequalities, the old and new unique-case gains, both lifted gains, the strict old-versus-new inequalities and additive gaps, the rounded running-time bases, every finite-decimal theorem constant, and the margins used to choose a finite implication strength.  It performs no parameter search; all checks use fixed rational data.

\section{Finite-strength quantitative lifting proof}\label{app:lifting-proof}

We prove Proposition~\ref{prop:quant-lift}, including the conditioning required in the favorable-prefix branch.

\subsection{The two imported lifting ingredients}

Fix a finite strength $w$.  By~\eqref{eq:pw-error}, the heuristic $\Pw$ has error at most
\[
 p_w=p_0+\eps_w,
 \qquad
 q_w=p_w-p_*=q_0+\eps_w.
\]
Applying Theorem~\ref{thm:SS-imported} gives, for every satisfiable 3-CNF formula $F$ on $n$ variables,
\begin{equation}\label{eq:finite-high-I}
 \Prb[\PPSZ_w(F)\text{ succeeds}]
 \ge
 2^{-p_wn+q_w\E_Q[I]}.
\end{equation}
The term $\eps_wn$ is retained explicitly.

When $\E_Q[I]$ is small, we use the following restriction lemma.

\begin{lemma}[A small liquid set yields a unique residual formula]\label{lem:liquid-restriction}
Suppose $I(\pi,\alpha)\le r$ for a permutation $\pi$ and a satisfying assignment $\alpha$ of $F$.  Let $L$ be the set of variables that are liquid at the moment they are processed along $(\pi,\alpha)$, and let $\rho=\alpha|_L$.  Then $|L|=I(\pi,\alpha)\le r$ and $F|_\rho$ has the unique satisfying assignment $\alpha|_{V\setminus L}$.
\end{lemma}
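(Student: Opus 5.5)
We first fix the definitions of Scheder and Steinberger. A variable $x$ is \emph{frozen} in a formula $G$ if all satisfying assignments of $G$ give $x$ the same value; otherwise it is \emph{liquid}. Processing $(\pi,\alpha)$ means that, for $j=1,\dots,n$, we substitute $\alpha(x_{\pi(j)})$ into the current residual formula $F_{j-1}$ to get $F_j$. The variable $x_{\pi(j)}$ is counted in $L$ exactly when it is liquid in $F_{j-1}$. The identity $|L|=I(\pi,\alpha)$ is then just the definition of $I$, and $|L|\le r$ follows from the hypothesis. The content of the lemma is the uniqueness statement, which I would prove by induction along the processing order.

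The invariant I will maintain is the following. Write $L_j=L\cap\{x_{\pi(1)},\dots,x_{\pi(j)}\}$ and $\rho_j=\alpha|_{L_j}$. The claim is that the satisfying assignments of $F|_{\rho_j}$ that agree with $\alpha$ on the processed frozen variables $\{x_{\pi(1)},\dots,x_{\pi(j)}\}\setminus L_j$ are exactly the satisfying assignments of $F_j$, extended by $\alpha$ on the processed variables. This is close to tautological. The real claim is stronger: every satisfying assignment $\beta$ of $F|_\rho$ automatically agrees with $\alpha$ on all frozen variables.

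I argue this by contradiction. Suppose $\beta$ satisfies $F|_\rho$, where $\rho=\alpha|_L$, and let $j$ be the first step with $\beta(x_{\pi(j)})\ne\alpha(x_{\pi(j)})$. Then $x_{\pi(j)}\notin L$, since $\beta$ extends $\rho$ on $L$. On every earlier processed variable, $\beta$ agrees with $\alpha$: on $L$ because it extends $\rho$, and on the frozen ones by the minimality of $j$. Hence the combined assignment $\beta$ on the unprocessed variables satisfies $F_{j-1}$. But $x_{\pi(j)}$ is frozen in $F_{j-1}$, and $\alpha$'s continuation is a satisfying assignment of $F_{j-1}$, so every satisfying assignment of $F_{j-1}$ gives $x_{\pi(j)}$ the value $\alpha(x_{\pi(j)})$. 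This contradicts $\beta(x_{\pi(j)})\ne\alpha(x_{\pi(j)})$. So no such $j$ exists, $\beta=\alpha$ on $V\setminus L$, and $F|_\rho$ has the unique satisfying assignment $\alpha|_{V\setminus L}$. It is satisfiable because $\alpha$ satisfies $F$.

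The main obstacle is making sure the definition of ``liquid'' in~\cite{SchederSteinberger} matches what this argument needs. It must mean liquid in the residual formula obtained by fixing all previously processed variables according to $\alpha$, whether they were frozen, liquid, or forced. It must not mean liquid relative to some restricted subformula. If their $I$ instead counts liquid variables under the PPSZ run with guessing, I would note that $(\pi,\alpha)$ fixes the values regardless, so the residual formulas coincide. The remaining point is that frozenness is a semantic property of $F_{j-1}$, so soundness or strength of $\Pw$ plays no role here.
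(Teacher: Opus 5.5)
Your proof is correct and is essentially the paper's argument. Both take the first variable in the order $\pi$ on which a competing solution $\beta$ extending $\rho$ differs from $\alpha$, and both use agreement on all earlier processed variables to show that this variable has both values available in the current residual formula, i.e.\ that it is liquid and hence lies in $L$, contradicting that $\beta$ extends $\rho$; you phrase the same step contrapositively through frozenness, and your preliminary invariant paragraph is unnecessary and can be dropped.
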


\begin{proof}
Assume that $\beta$ is a satisfying assignment of $F$ extending $\rho$, and let $x$ be the first variable in the order $\pi$ on which $\beta$ and $\alpha$ differ.  Immediately before $x$ is processed, the two assignments agree on all previously assigned variables.  Both values of $x$ therefore extend to satisfying assignments of the current residual formula, so $x$ is liquid at that moment.  Hence $x\in L$, but $\beta$ extends $\rho=\alpha|_L$, a contradiction.
\end{proof}

If $\E_Q[I]\le\delta n$, some pair $(\pi,\alpha)$ in the support of $Q$ satisfies $I(\pi,\alpha)\le\delta n$.  Since $I$ is integral, Lemma~\ref{lem:liquid-restriction} gives a restriction of at most $\lfloor\delta n\rfloor$ variables with a unique residual formula.

\subsection{Realizing the restriction as a PPSZ prefix}

\begin{lemma}[Prefix realization]\label{lem:prefix-realization}
Let $F$ be a satisfiable CNF formula on variable set $V$, let $\alpha\in\sat(F)$, and let $R\subseteq V$ have size $r$.  Assume that
$F|_{\alpha|_R}$ has the unique satisfying assignment $\alpha|_{V\setminus R}$.  Then, for every finite implication strength $w$,
\begin{equation}\label{eq:prefix-product}
 \Prb[\PPSZ_w(F)=\alpha]
 \ge
 \binom nr^{-1}2^{-r}
 \Prb[\PPSZ_w(F|_{\alpha|_R})=\alpha|_{V\setminus R}].
\end{equation}
\end{lemma}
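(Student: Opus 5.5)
We condition on the event that the random order places exactly the variables of $R$ in the first $r$ positions and that each variable of $R$ receives its value under $\alpha$ when it is processed. The order $\pi$ is uniform, so the set of the first $r$ positions equals $R$ with probability exactly $\binom nr^{-1}$. Conditioned on this, the restriction of $\pi$ to $V\setminus R$ is a uniformly random order of $V\setminus R$. This is the distribution PPSZ uses on the residual formula $F|_{\alpha|_R}$.

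The prefix phase is handled one variable at a time. Suppose the variables of $R$ preceding some $x\in R$ have already been set according to $\alpha$. If $\Pw$ forces $x$, soundness forces it to a value consistent with every satisfying assignment of the current residual formula. Since $\alpha$ extends that residual assignment, the forced value is $\alpha(x)$, and this happens with probability $1$. Otherwise PPSZ guesses an unbiased bit, which equals $\alpha(x)$ with probability $1/2$. In either case the conditional probability is at least $1/2$. Multiplying over the $r$ variables shows that the whole prefix agrees with $\alpha|_R$ with probability at least $2^{-r}$. The guessed bits are independent of the order and of later guesses, so the bound holds conditionally on the order event.

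After the prefix is set to $\alpha|_R$, the current formula is exactly $F|_{\alpha|_R}$. The run of $\PPSZ_w$ on $F$ over the remaining variables is then identical in law to a run of $\PPSZ_w$ on $F|_{\alpha|_R}$ with a uniform order of $V\setminus R$ and fresh unbiased guesses. The reason is that the heuristic $\Pw$ acts only on the residual clauses and the current partial assignment. It therefore finds $\alpha|_{V\setminus R}$ with probability $\Prb[\PPSZ_w(F|_{\alpha|_R})=\alpha|_{V\setminus R}]$. Chaining the three factors by the law of total probability gives \eqref{eq:prefix-product}. The hypothesis that the residual solution is unique is not needed for the inequality itself; it only fixes which assignment is the target.

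The main point requiring care is this identification of the continuation with an independent fresh run on the restricted formula. One must check that the implementation simplifies the formula by substituting each assigned value, so that the residual clause set after the prefix literally equals $F|_{\alpha|_R}$. One must also check that the implication rule depends only on that clause set, not on how the prefix values were obtained (forced or guessed). Both are true for the bounded-width heuristic as defined in Section~\ref{sec:finite-strength}. With that established, the factorization is exact.
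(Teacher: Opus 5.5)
Your proof is correct and follows essentially the same route as the paper. You condition on $R$ occupying the first $r$ positions (probability $\binom nr^{-1}$), use soundness to bound the prefix-success probability below by $2^{-r}$, and identify the continuation with a fresh run of $\PPSZ_w$ on $F|_{\alpha|_R}$ under a uniform order of $V\setminus R$. The paper conditions explicitly on the prefix order $\sigma$ and spells out why the prefix event is independent of the suffix order, whereas your bound $2^{-r}$ holds pointwise for every order, so the two arguments amount to the same thing.
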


\begin{proof}
Let $A_R$ be the event that the first $r$ positions of the uniformly random variable permutation are precisely the variables of $R$, in an arbitrary order.  Every $r$-subset is equally likely, so
\[
 \Prb[A_R]=\binom nr^{-1}.
\]
Condition on $A_R$ and on a particular prefix order $\sigma$ of $R$.  Run PPSZ along the assignment $\alpha$.  Soundness implies that every value inferred during the prefix equals the corresponding value of $\alpha$.  Let $g_\sigma\le r$ be the number of prefix variables that are not inferred and therefore require guesses.  The guesses are independent unbiased bits, so the probability that all required prefix guesses are correct is
\[
 2^{-g_\sigma}\ge2^{-r}.
\]
Thus forcing can only increase the probability relative to the crude lower bound $2^{-r}$.

After a successful prefix, the residual formula is exactly $F|_{\alpha|_R}$.  Conditioned on $A_R$ and the fixed prefix order $\sigma$, the relative order of the variables in $V\setminus R$ is uniform over all $(n-r)!$ possibilities.  For the fixed prefix order, every forcing decision in the prefix is determined before the suffix order is consulted.  Hence the event that the prefix guesses are correct is measurable with respect to the prefix order and the random bits consumed in the prefix, and is independent of the relative order on $V\setminus R$.  The unused random bits remain independent and unbiased.  Therefore, conditioned on $A_R$, on $\sigma$, and on successful realization of the prefix, the suffix is distributed exactly as a fresh run of $\PPSZ_w$ on $F|_{\alpha|_R}$: it uses the same proof heuristic $\Pw$, a uniform order on the residual variables, and independent unbiased guesses.

The conditional suffix probability is consequently the final factor in~\eqref{eq:prefix-product}.  The lower bound is independent of $\sigma$, so averaging over all prefix orders proves the claim.
\end{proof}

If a restriction $\rho$ from Lemma~\ref{lem:liquid-restriction} fixes fewer than $r$ variables, extend its domain to an arbitrary $r$-set by assigning additional variables according to the unique satisfying assignment of $F|_\rho$.  The further restricted formula remains uniquely satisfiable.  Hence Lemma~\ref{lem:prefix-realization} may be applied with exactly $r=\lfloor\delta n\rfloor$ variables.

The same finite strength $w$ is used in~\eqref{eq:finite-high-I}, in the prefix, and in the suffix.  Since 3-CNF formulas are closed under restrictions and $\Pw$ is unchanged on the residual formula, the suffix satisfies the hypotheses of the unique-case bound at the same strength.

\subsection{Proof of Proposition~\ref{prop:quant-lift}}

Fix $\eta<\eta_\infty(\gamma)$.  By the definition of $\eta_\infty$, choose a fixed $\delta\in(0,1/2)$ such that
\begin{equation}\label{eq:strict-lift-slack}
 q_0\delta>\eta,
 \qquad
 u_\gamma(\delta)>\eta.
\end{equation}
Let
\[
 \sigma=
 \min\{q_0\delta-\eta,\,u_\gamma(\delta)-\eta\}>0.
\]
First choose a finite implication strength $w\ge w_U$ so large that
\begin{equation}\label{eq:choose-w-lift}
 \eps_w(1-\delta)<\frac{\sigma}{3}.
\end{equation}
This is possible because $\eps_w\to0$.  Only after fixing this $w$ do we choose $n$.

Let $F$ be a satisfiable 3-CNF formula on $n$ variables.  There are two cases.

\paragraph{Large-$I$ branch.}
If $\E_Q[I]\ge\delta n$, then~\eqref{eq:finite-high-I} gives
\begin{align*}
 \log_2\Prb[\PPSZ_w(F)\text{ succeeds}]
 &\ge -p_wn+q_w\delta n\\
 &=-p_0n+
 \bigl(q_0\delta-\eps_w(1-\delta)\bigr)n\\
 &\ge -p_0n+(\eta+2\sigma/3)n,
\end{align*}
where the last inequality follows from~\eqref{eq:strict-lift-slack} and~\eqref{eq:choose-w-lift}.  In particular, the linear finite-strength term is bounded by the fixed slack.

\paragraph{Unique-residual branch.}
Assume $\E_Q[I]<\delta n$.  Lemma~\ref{lem:liquid-restriction} gives a restriction of at most $r=\lfloor\delta n\rfloor$ variables with a unique residual formula; extend it, if necessary, to exactly $r$ variables along its unique satisfying assignment.  Put
\[
 \delta_n=\frac rn,
 \qquad
 m=n-r.
\]
For all sufficiently large $n$, we have $m\ge m_U$.  Lemma~\ref{lem:prefix-realization} and the unique-case hypothesis give
\begin{align*}
 \Prb[\PPSZ_w(F)\text{ succeeds}]
 &\ge
 \binom nr^{-1}2^{-r}
 2^{-p_0m+\gamma m}\\
 &\ge
 2^{-n\hbin(\delta_n)-r-p_0(n-r)+\gamma(n-r)}\\
 &=
 2^{-p_0n+u_\gamma(\delta_n)n}.
\end{align*}
Here $\binom nr\le2^{n\hbin(\delta_n)}$.  Since $\delta_n\to\delta$ and $u_\gamma$ is continuous, for all sufficiently large $n$,
\[
 u_\gamma(\delta_n)
 \ge u_\gamma(\delta)-\frac{\sigma}{3}
 \ge\eta+\frac{2\sigma}{3}.
\]

Both branches give probability at least $2^{-p_0n+\eta n}$ for all sufficiently large $n$ at the chosen strength $w$.  Set $w_G=w$.  For every $w'\ge w_G$, monotonicity permits a coupling in which $\PPSZ_{w'}$ makes no more guesses than $\PPSZ_{w_G}$ along any fixed satisfying assignment.  Hence the same lower bound holds for all $w'\ge w_G$.  The quantifier order is
\[
 \gamma
 \longrightarrow
 \eta<\eta_\infty(\gamma)
 \longrightarrow
 \delta
 \longrightarrow
 w
 \longrightarrow
 n,
\]
and no fixed-$w$ linear discrepancy is placed inside an $o(n)$ term.

\section*{Acknowledgments}
We thank Shiteng Chen for helpful discussions.


\begin{thebibliography}{10}
\small

\bibitem{AttiasGaoReyzin}
I.~Attias, X.~Gao, and L.~Reyzin.
\newblock Learning-augmented algorithms for Boolean satisfiability.
\newblock \emph{CoRR}, abs/2505.06146, 2025.
\newblock \href{https://arxiv.org/abs/2505.06146}{arXiv:2505.06146}.

\bibitem{PPSZ}
R.~Paturi, P.~Pudl\'ak, M.~E. Saks, and F.~Zane.
\newblock An improved exponential-time algorithm for $k$-{SAT}.
\newblock \emph{Journal of the ACM}, 52(3):337--364, 2005.

\bibitem{SchederFOCS}
D.~Scheder.
\newblock {PPSZ} is better than you think.
\newblock In \emph{62nd IEEE Annual Symposium on Foundations of Computer Science (FOCS)}, pages 205--216, 2021.
\newblock \href{https://doi.org/10.1109/FOCS52979.2021.00028}{doi:10.1109/FOCS52979.2021.00028}.

\bibitem{SchederFull}
D.~Scheder.
\newblock {PPSZ} is better than you think.
\newblock \emph{Electronic Colloquium on Computational Complexity}, Report TR21-069, Revision~1, 2021.
\newblock \href{https://eccc.weizmann.ac.il/report/2021/069/revision/1/download/}{stable Revision~1 PDF}.

\bibitem{SchederJournal}
D.~Scheder.
\newblock {PPSZ} is better than you think.
\newblock \emph{TheoretiCS}, Volume~3, Article~5, pages 1--37, 2024.
\newblock \href{https://doi.org/10.46298/theoretics.24.5}{doi:10.46298/theoretics.24.5}.

\bibitem{SchederSteinberger}
D.~Scheder and J.~P. Steinberger.
\newblock {PPSZ} for general $k$-{SAT} and {CSP}---making {Hertli}'s analysis simpler and 3-{SAT} faster.
\newblock \emph{Computational Complexity}, 33, Article~13, pages 1--48, 2024.
\newblock \href{https://doi.org/10.1007/s00037-024-00259-y}{doi:10.1007/s00037-024-00259-y}.

\end{thebibliography}
\end{document}